\newtheorem{definition}{Definition}
\newtheorem{theorem}{Theorem}
\newtheorem{corollary}{Corollary}
\newtheorem{lemma}{Lemma}
\newtheorem{fact}{Fact}
\title{Proof of Work With External Utilities}
\author{Yogev Bar-On \\ Tel Aviv University, Israel \\ \texttt{baronyogev@gmail.com} \and Ilan Komargodski \\ The Hebrew University of Jerusalem, Israel \\ \texttt{ilankom10@gmail.com} \and Omri Weinstein \\ The Hebrew University of Jerusalem, Israel \\ \texttt{omri.weins@gmail.com}}
\date{}
\begin{document}
\maketitle
\begin{abstract}
Proof-of-Work (PoW) consensus is traditionally analyzed under the assumption that all miners incur similar costs per unit of computational effort. In reality, costs vary due to factors such as regional electricity cost differences and access to specialized hardware. These variations in mining costs become even more pronounced in the emerging paradigm of \emph{Proof-of-Useful-Work} (PoUW), where miners can earn additional \emph{external} rewards by performing beneficial computations, such as Artificial Intelligence (AI) training and inference workloads \cite{schen2025proofs}.

Continuing the work of \cite{fiat2019energy}, who investigate equilibrium dynamics of PoW consensus under heterogeneous cost structures due to varying energy costs, we expand their model to also consider external rewards. We develop a theoretical framework to model miner behavior in such conditions and analyze the resulting equilibrium. Our findings suggest that in some cases, miners with access to external incentives will optimize profitability by concentrating their useful tasks in a single block. We also explore the implications of external rewards for decentralization, modeling it as the Shannon entropy of computational effort distribution among participants.

Empirical evidence supports many of our assumptions, indicating that AI training and inference workloads, when reused for consensus, can retain security comparable to Bitcoin while dramatically reducing computational costs and environmental waste.
\end{abstract}

\section{Introduction}
Proof-of-Work (PoW) has served as the cornerstone of permissionless blockchain consensus since Bitcoin's inception in 2008 \cite{nakamoto2008bitcoin}. Blockchains require consensus for all the participants to agree on the content of a public ledger. In PoW networks, the ledger is continually updated by the participants of the network (``miners''), via an operation called \emph{mining}. The mining operation requires solving a hard computational puzzle (finding a small hash), ensuring the security of the network due to the scarcity of computational resources. When applied to financial peer-to-peer payment transactions, the consensus prevents counterfeiting of the underlying currency.

As it turns out, modeling PoW-blockchain markets in a game-theoretic framework is subtle and may lead to very different market dynamics depending on how network interactions are modeled. For instance, in Nakamoto's original work \cite{nakamoto2008bitcoin} it was claimed and proved that as long as no miner has a majority of the mining power, no one would have reason to deviate from the protocol. This turned out to be false in a more careful modeling of a blockchain network (and also in real life) \cite{eyal2014majority}, since there are situations in which a miner is incentivized to delay the publication of a new block.

Another perspective was introduced by~\cite{fiat2019energy}. They investigated strategic deviations from the intended PoW protocol, due to varying energy costs and mining difficulty. Specifically, they showed that when energy costs are taken into account, counterintuitive and unintended strategic behavior occurs. For example, if miners have different energy efficiencies and are restricted to choose the same hash rate for many epochs, there is a unique pure equilibrium in which miners may choose to abstain from mining altogether if their efficiency is too low compared to others.

In recent years, an emerging paradigm in Proof-of-Stake consensus is \emph{restaking} \cite{buterin2021restaking, team2024eigenlayer} -- a mechanism that allows network participants to leverage staked assets across multiple applications concurrently, such as other blockchains. Essentially, the same resource is used to secure many protocols, thereby enhancing capital efficiency. However, the increased utility of staked assets through restaking introduces complexities in the protocol design, shifting participant incentives, and affecting security. Intuitively, the same amount of funds can be used for much more control, compared to the case where restaking is not available.

The closest analogue of restaking in Proof-of-Work blockchains is \textbf{Proof-of-Useful-Work} (PoUW) \cite{ball2017proofs,dong2019proofware, schen2025proofs}. In PoUW, mining involves computing a \emph{real-world} computational task (for which an external entity is willing to pay, such as AI model training, protein-folding, or shortest-paths in road networks), allowing miners to earn an external reward for performing the useful task in addition to the internal block reward. Similarly to restaking, the external task could simply be mining a different blockchain, a concept that is referred to as \emph{merge mining} \cite{judmayer2017merged}.

PoUW holds a great environmental advantage over traditional PoW, by reusing the same compute resources for different purposes. From this perspective, PoUW can be the ``best-of-both-worlds'' when comparing PoW to PoS. 

A key feature in Proof-of-Useful-Work networks is that mining requires both computational power \emph{and} the ability to receive external rewards on computation. This could be in the form of access to data needed for compute, clients that are willing to pay for it, or in the case of merge mining, the existence of other compatible blockchains. Hence, the underlying task required for mining is essentially ``restaked'' for different purposes.

\subsection{Our contributions}
In this work, we investigate the economics of varying costs for mining due to external rewards, using Proof-of-Useful-Work as a case study. Specifically, we put forth a general framework in which mining cost could depend on some external reward mechanism, which in turn implies a discount on their mining costs. In particular, our framework is general enough, so it can be used to capture other similar mechanisms like restaking.

Our work continues on the work of \cite{fiat2019energy}, where we observe that their analysis is limited to a setting where the energy cost scales linearly with the hash power. In contrast, in some settings, we see that super-linear cost functions might be a better approximation to reality (see Figure \ref{fig:quad_GPUs}).

We model the miners' cost function as a \emph{quadratic equation} parametrized with two variables, encompassing both the non-linear scaling of compute with the energy cost, and the effects of external rewards as cost subsidies. We characterize the Nash equilibrium implied by this model, and show that it holds a nice form with intuitive meaning (Corollary \ref{cor:linear-hashrate}). We also show that anyone with access to an external reward chooses to participate in consensus. Interestingly, we see that a dominating strategy is to concentrate useful work as much as possible, preferably to a single consensus epoch.

We then discuss the effects of PoUW on network security and decentralization, showing how evenly distributed access to useful work increases the security of the network rather than decreasing it (Lemma \ref{lem:decentralization}), compared to restaking in PoS networks, where usually the effects on security are negative.

\subsection{Related work}
Since the introduction of Bitcoin, there has been an extensive body of work trying to characterize the game-theoretic and economic equilibrium of PoW-based blockchains \cite{biais2019blockchain, budish2018economic,bonneau2015sok}.

The work of \cite{kroll2013economics} showed that infinitely many equilibria exist where Bitcoin participants behave inconsistently with the protocol. \cite{eyal2014majority,sapirshtein2016optimal,saad2019countering,negy2020selfish,bahrani2024undetectable} investigated the so called \emph{selfish mining} phenomena, where miners are incentivized to not reveal the solution for the computational puzzle.

Other works \cite{carlsten2016instability,chen2019axiomatic} researched the effects of mining rewards on the economic security of the network. Equilibria conditions for miners were investigated in \cite{toda2020game,chaidos2023blockchain}, where \cite{fiat2019energy,alberto2022evolutionary} considered varying mining costs like we do.

The security of restaking mechanisms was examined by \cite{chitra2024much,durvasula2024robust}. To quantify decentralization of blockchain networks, many works \cite{gupta2017gini,wu2019information,gochhayat2020measuring,lin2021measuring}, used different metrics such as Shannon entropy \cite{shannon1948mathematical}, Gini coefficient \cite{gini1936measure} and Nakamoto coefficient \cite{srinivasan2017quantifying}.

Suggestions of practical Proof-of-Useful-Work mechanisms were introduced by \cite{ball2017proofs,dong2019proofware}, with \cite{baldominos2019coin,lihu2020proof,chenli2020dlchain, schen2025proofs} specifically discussing using Artificial Intelligence (AI) workloads as a venue for external mining rewards. We are not aware of any prior work of analyzing the game-theoretic aspects of PoUW.

\subsection{Outline}
In the following section, we introduce our mathematical model and assumptions on compute costs and mining rewards under varying energy costs and access to external rewards (useful tasks). In Section \ref{sec:equilibria}, we then show how the compute power is distributed between miners in various cases under equilibrium, and the impact it has on their strategies. We continue to discuss the effects of varying cost functions to network decentralization in Section \ref{sec:decentralization}, and conclude in Section \ref{sec:conclusions}. 

\section{Model and Preliminaries}
We are interested in a system of $n$ miners spending compute resources in order to gain a reward, consisting of a \emph{block reward} and an \emph{external reward}. We denote by $h_i \geq 0$ the amount of compute resources the miner $i$ spends in a single time unit, and refer to it as the \emph{hash rate} (although it is important to note that the compute resources are not necessarily spent on hashing).

It is a known result \cite{chen2019axiomatic} that under some reasonable assumptions (such as symmetry and sybil-proofness), the only fair way to allocate a block reward is the \emph{proportional allocation rule}: 
\[
r_i\triangleq \frac{h_i}{H}R
\]
where $R>0$ is the block reward, $r_i$ is the \emph{expected} block reward of miner $i$, and $H=\sum_{i\in [n]} h_i$.

Let $H_{-i}=H-h=\sum_{j \in [n]\setminus \{i\}} h_j$ be the total hash rate of all miners except miner $i$. The net reward (utility) of miner $i$ as a function of their hash rate would then be:
\begin{equation}\label{eq:utility}
u_i(h) \triangleq r_i(h) - c_i(h) = \frac{h}{h + H_{-i}}R - c_i(h)
\end{equation}
where $c_i(h)$ is the cost of miner $i$ to produce a hash rate of $h$ (including the external reward, so the cost may be negative).

\subsection{Modeling the cost function}
The mining cost can be split into two parts: the actual compute cost, minus external rewards (which we call ``coupons''). To make our framework easily generalizable to more use cases, we treat the sum of both as a single cost function.

Our model attempts to capture a broad family of cost functions with enough structure to allow for an analytic solution. To this end, we identify two key properties for the cost function that should hold in practice:
\begin{itemize}
    \item It does not cost anything to not participate. In other words, $c_i(0) = 0$ for all $i\in [n]$.
    \item The cost function is convex.
    \begin{itemize}
        \item The intuition for convexity is that as long as there are useful tasks available, the cost is subsidized by an external reward (which we refer to as \emph{compute coupons}). Thus, the cost grows slowly (or even decreases) with the hash rate. Once there are no external rewards available (the ``coupons'' are used up), the cost starts to increase more rapidly.
        \item Moreover, the cost of renting computing devices (such as GPUs) increases convexly with the hash rate, as can be seen in Figure \ref{fig:quad_GPUs}.
    \end{itemize}
\end{itemize}

\begin{figure}[ht]
    \centering
    \includegraphics[width=0.95\textwidth]{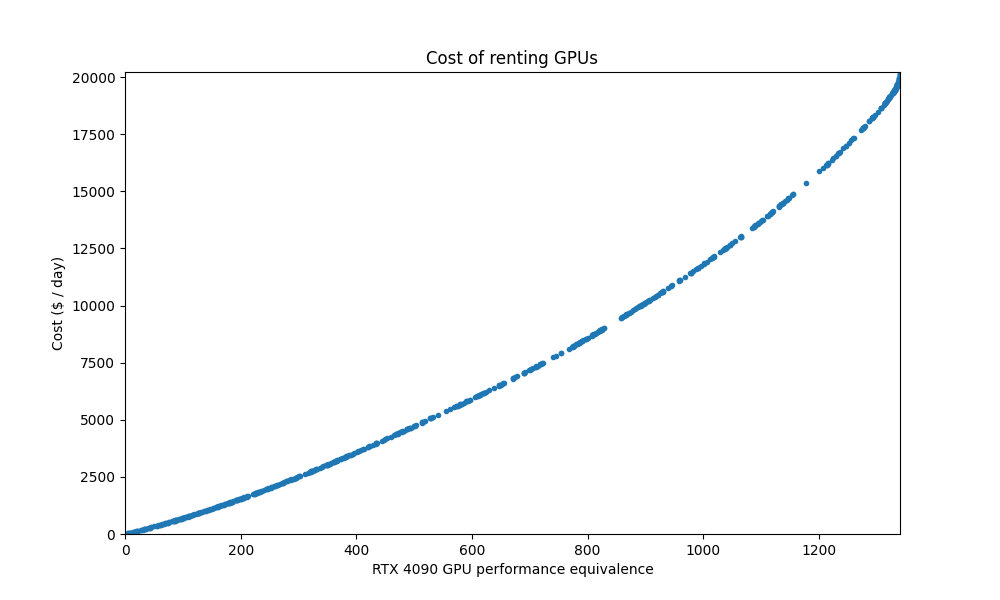}
    \caption{The cost per day of renting GPUs from \url{vast.ai}, with performance measured by the equivalent number of Nvidia RTX 4090 GPUs.}
    \label{fig:quad_GPUs}
\end{figure}

Our first observation is that if all the cost functions $c_i(h)$ are convex, we can see from Eq. (\ref{eq:utility}) that the utility functions $u_i(h)$ are concave. We can then use the following fact \cite{debreu1952social, glicksberg1952further, rosen1965existence}: 
\begin{fact}\label{fact:nash}
In a continuous concave game with a bounded strategy space, there exists a unique and pure Nash equilibrium.
\end{fact}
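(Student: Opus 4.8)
The plan is to separate the two claims, existence and uniqueness, since they rest on genuinely different tools. For \emph{existence} I would follow the classical Debreu--Glicksberg--Fan route via Kakutani's fixed point theorem. Writing $S=\prod_i S_i$ for the joint strategy space (which I take to be nonempty, compact, and convex, as in the cited references; ``bounded'' should be read together with the implicit closedness and convexity), define for each player $i$ the best-response correspondence
\[
B_i(h_{-i}) \triangleq \argmax_{h_i \in S_i} u_i(h_i, h_{-i}).
\]
Since each $u_i$ is jointly continuous and concave in its own coordinate $h_i$, the set $B_i(h_{-i})$ is nonempty (compactness plus continuity), convex (concavity in $h_i$), and the correspondence is upper hemicontinuous (Berge's maximum theorem). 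The product map $B(h)=\prod_i B_i(h_{-i})$ then sends the nonempty compact convex set $S$ to itself and has a closed graph with convex values, so Kakutani's theorem yields a fixed point $h^\ast$, which is by construction a pure Nash equilibrium.

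For \emph{uniqueness} I would invoke Rosen's argument, which proceeds by a variational-inequality / monotonicity computation rather than a fixed-point one. Assuming each $u_i$ is differentiable in $h_i$, collect the own-gradients into the pseudo-gradient $g(h)=(\nabla_{h_1}u_1(h),\dots,\nabla_{h_n}u_n(h))$, and suppose $h^\ast$ and $h^{\ast\ast}$ are both equilibria. The first-order optimality (variational inequality) condition for each player's concave maximization over the convex set $S_i$ gives, for every $i$, the two inequalities $(h_i^{\ast\ast}-h_i^\ast)^\top \nabla_{h_i}u_i(h^\ast)\le 0$ and $(h_i^\ast-h_i^{\ast\ast})^\top \nabla_{h_i}u_i(h^{\ast\ast})\le 0$. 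Summing over $i$ and adding the two resulting inequalities yields
\[
\sum_i (h_i^{\ast\ast}-h_i^\ast)^\top \bigl(\nabla_{h_i}u_i(h^\ast)-\nabla_{h_i}u_i(h^{\ast\ast})\bigr)\le 0.
\]
Rosen's \emph{diagonal strict concavity} (DSC) condition states precisely that the left-hand side is strictly positive whenever $h^\ast\neq h^{\ast\ast}$ (equivalently, that $-g$ is strictly monotone), so this contradiction forces $h^\ast=h^{\ast\ast}$.

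The main obstacle is that uniqueness does \emph{not} follow from concavity alone: the hypothesis as literally phrased suffices for existence but not for a unique equilibrium. What is genuinely needed is the stronger DSC condition, so the real work is to verify that the games of interest actually satisfy it. In the concrete model of this paper, strict convexity of each cost $c_i$ makes $u_i$ strictly concave in $h_i$, which handles the ``diagonal'' part; the remaining subtlety is the cross-player coupling through the shared denominator $H=\sum_j h_j$ in the proportional reward $\tfrac{h_i}{H}R$. I would therefore discharge this obstacle by checking Rosen's sufficient condition directly, namely that the symmetrized Jacobian of $-g$ is positive definite on the strategy space. In practice I would either strengthen the statement to explicitly assume strict concavity and DSC, or, preferably, prove the DSC inequality by hand for the quadratic-cost utilities used later, thereby specializing the cited general theorem to exactly the setting the paper requires.
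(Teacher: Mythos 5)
Your decomposition is exactly the one the paper's citations intend---Debreu/Glicksberg via Kakutani for existence of a pure equilibrium, Rosen for uniqueness---and both halves of your sketch are sound. Note that the paper itself offers no proof of this Fact: it is stated with citations only, and the proof of Theorem \ref{thm:hashrate} simply invokes it. The substantive content of your write-up is the caveat, and you are right about it: uniqueness does \emph{not} follow from the stated hypotheses. Continuity plus concavity in one's own strategy (even strict concavity) over a compact convex strategy space gives existence only; Rosen's uniqueness theorem additionally requires diagonal strict concavity (DSC). A concrete counterexample to the Fact as phrased: two players on $[0,1]^2$ with $u_1(x,y)=u_2(x,y)=-(x-y)^2$ form a continuous concave game with bounded strategy spaces, yet every profile with $x=y$ is a pure Nash equilibrium, so there is a continuum of them. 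Your refusal to derive uniqueness from concavity alone is therefore not excess caution: the Fact needs either a DSC hypothesis added to its statement or a game-specific verification.

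Your proposed repair---checking DSC by hand for the quadratic-cost utilities---is the right one, and it is easier than your write-up anticipates in the regime the paper actually uses. Under the linear reward $R(H)=\rho H$ (Corollary \ref{cor:linear-hashrate} and everything downstream of it), the proportional reward collapses: $\frac{h_i}{H}R(H)=\rho h_i$, so $u_i(h)=(\rho+\alpha_i\beta_i)h_i-\frac{\alpha_i}{2}h_i^2$ does not depend on the other players at all. The pseudo-gradient then has a diagonal Jacobian with entries $-\alpha_i<0$, DSC holds trivially, and the unique equilibrium is each player's unconstrained maximizer $h_i=\beta_i+\rho/\alpha_i$, recovering Eq.~(\ref{eq:linear-hashrate}). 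For a general increasing reward $R(H)$, however, the cross-player coupling through $H$ is genuine, and there the paper's appeal to this Fact for uniqueness in the proof of Theorem \ref{thm:hashrate} is unsupported without the computation you describe: the closed form Eq.~(\ref{eq:hashrate}) expresses $h_i$ in terms of the aggregate $H$, which is itself endogenous, so uniqueness of the equilibrium still requires an argument (either the symmetrized-Jacobian condition you propose, or a direct proof that the fixed-point equation for $H$ has a single solution).
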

We may assume the hash rate is upper bounded (e.g., by the total kW electricity the world can produce), and thus Fact \ref{fact:nash} implies that an equilibrium exists for miners.

In practice, cost functions vary drastically and depend on many factors, such as energy cost, access to ASICs, availability of external rewards through useful tasks, and the type of tasks. Thus, we approximate the real cost function with a \emph{quadratic polynomial}, i.e., up to the second nonzero term in the Taylor approximation of the function. The choice for a second degree polynomial arises from the fact that in our context, the quadratic approximation has a ``semantic'' meaning, as the two terms roughly correspond to the cost of computation and the amount of external rewards. 

\begin{definition}[Quadratic Cost Function]
The quadratic cost function of miner $i$ is 
\begin{equation}\label{eq:cost}
c_i(h) \triangleq \frac{\alpha_i}{2}h^2 - \alpha_i\beta_i h
\end{equation}
where we refer to $\alpha_i > 0$ as the \emph{compute cost} and $\beta_i \geq 0$ as the compute \emph{coupons} of miner $i$.
\end{definition}
By taking the derivative of the cost function we get the marginal cost for one extra unit of hash rate:
\begin{equation}\label{eq:marginal}
c_i'(h) = \alpha_i (h - \beta_i).
\end{equation}
As we can see, the first $\beta_i$ units of hash rate (number of coupons) are ``free'', and the rest cost $\alpha_i h$, making the terminology appropriate: compute coupons can be thought of as the amount of external rewards available to the miner (e.g. due to having access to a client willing to pay for AI workloads). It is important to note that despite providing intuition for the quadratic approximation, the cost parameters do not directly map to real-world values.

\subsection{Decentralization and security}
We would like to quantify how external rewards affect the decentralization of the system, an important feature for permissionless blockchains. To do this, we follow a common approach to measure decentralization as the Shannon entropy of the miners' hash rate \cite{wu2019information}. The entropy of the hash rate measures how concentrated the compute power in the system is, with higher entropies meaning that it is more spread out between different miners.

\begin{definition}[Decentralization Coefficient]\label{def:entropy}
The \emph{decentralization coefficient} of a system with $n$ miners, each with hash rate $h_i$, is the Shannon entropy $S$ of the normalized hash rates:
\begin{equation}\label{eq:entropy}
D(h_1,...,h_n) \triangleq S\left(\frac{h_1}{H},...,\frac{h_n}{H}\right) = -\sum_{i\in [n]}{\frac{1}{H} h_i \log\left(\frac{h_i}{H}\right)},
\end{equation}
where $H=\sum_{i\in [n]}h_i$.
\end{definition}

We emphasize that the decentralization coefficient, as defined above, cannot be calculated in practice due to the inherent privacy in blockchain. Miners that appear to be different agents in the system may actually all be a single actor, artificially increasing the entropy without any increased security. For our intents and purposes, we only need to analyze how the decentralization coefficient is affected by PoUW, so we do not need to find the actual value for it.

\paragraph{\bf Rewards \rm}
One more thing to consider is that using entropy does not take into consideration the absolute value of the total hash rate in the system. Indeed, even with high entropy, if the total hash rate is low, a new actor could theoretically reduce the decentralization coefficient to a low value without many resources.

That said, we assume that in practice, the total hash rate correlates with the incentives of the miners. Specifically, we assume the block reward value $R$ is an increasing function of the total hash rate -- $R(H)$ increases with $H$.

To simplify our analysis, we will even assume for some of our results that the reward is a linear function of the total hash rate, i.e. $R(H)=\rho H$ where $\rho$ is the \emph{relative reward parameter}. We acknowledge this is a very simplified view, but is it useful for qualitative results, and is grounded in reality (see Figure \ref{fig:hashrates}).

\begin{figure}[ht]
    \centering
    \includegraphics[width=0.95\textwidth]{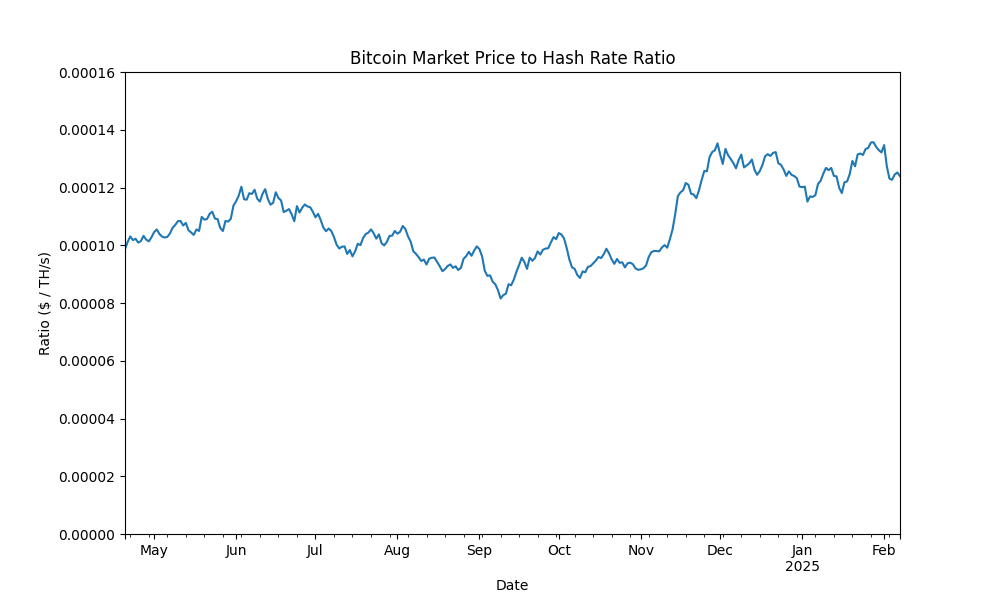}
    \caption{The ratio between the market price of Bitcoin and the total hash rate in the network (the relative reward parameter) since the last halving of the Bitcoin block reward. As you can see, the ratio is roughly stable. Source: \url{blockchain.com}.}
    \label{fig:hashrates}
\end{figure}

\section{Equilibrium of mining with external rewards}\label{sec:equilibria}
To compute the hash rates at equilibrium, we can compute the best-response strategy for each miner: the hash rate that will maximize their utility, assuming the other hash rates are known. We get the following result:
\begin{theorem}\label{thm:hashrate}
Given $n$ miners with quadratic cost functions, with compute costs $\{\alpha_i\}_{i\in [n]}$ and compute coupons $\{\beta_i\}_{i\in [n]}$, there is a unique pure Nash equilibrium such that for all $i\in [n]$ the hash rate satisfies:
\begin{equation}\label{eq:hashrate}
h_i = \frac{\alpha_i \beta_i H^2 + HR(H)}{\alpha_i H^2 + R(H) - HR'(H)},
\end{equation}
where $R(H)$ is the block reward in terms of $H=\sum_i h_i$.
\end{theorem}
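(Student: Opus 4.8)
The plan is to combine the existence-and-uniqueness guarantee of Fact~\ref{fact:nash} with a first-order characterization of each miner's best response. Since each $c_i$ is convex, the utilities $u_i$ of Eq.~(\ref{eq:utility}) are concave in the player's own hash rate on a bounded strategy space, so Fact~\ref{fact:nash} already yields a unique pure Nash equilibrium; what remains is to show that at this equilibrium every $h_i$ satisfies Eq.~(\ref{eq:hashrate}). Because $u_i(\cdot)$ is concave and differentiable, an interior best response is pinned down by the stationarity condition $u_i'(h_i)=0$, and this is the identity I would manipulate.

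First I would write the utility explicitly as $u_i(h)=\frac{h}{H}R(H)-c_i(h)$ with $H=h+H_{-i}$, and differentiate with respect to the miner's own variable $h$, holding $H_{-i}$ fixed. The one delicate point is that $R$ depends on $h$ through $H$, so the chain rule contributes an $R'(H)$ term. A careful computation of $\frac{d}{dh}\bigl[\frac{h}{H}R(H)\bigr]$ gives $\frac{R(H)H_{-i}+hHR'(H)}{H^2}$, and using $c_i'(h)=\alpha_i(h-\beta_i)$ from Eq.~(\ref{eq:marginal}) the stationarity condition becomes
\begin{equation*}
\frac{R(H)H_{-i}+hHR'(H)}{H^2}=\alpha_i(h-\beta_i).
\end{equation*}

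Next I would substitute $H_{-i}=H-h$, clear the denominator $H^2$, and collect all terms involving $h$ on one side. This is routine algebra: the terms proportional to $h$ are $\alpha_i H^2 h$, $R(H)h$, and $-hHR'(H)$, while the terms free of $h$ are $HR(H)$ and $\alpha_i\beta_i H^2$. Isolating $h$ then reproduces Eq.~(\ref{eq:hashrate}) exactly. I would stress that the result is not a closed form but an equilibrium identity, since the $H$ on the right-hand side still aggregates all players' choices; summing the identity over $i$ yields the fixed-point equation that determines $H$ itself, consistent with the claim being a characterization of the equilibrium profile.

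The conceptual obstacles are mild. The main thing to get right is the chain-rule term $R'(H)$ in the differentiation, which is precisely the source of the $-HR'(H)$ in the denominator of Eq.~(\ref{eq:hashrate}); omitting it gives the wrong formula. A useful sanity check is the linear case $R(H)=\rho H$, where the reward term collapses to $\rho h$ and the formula reduces to $h_i=\beta_i+\rho/\alpha_i$. The second point is boundary behavior: stationarity characterizes an interior optimum, so to justify the formula for all $i$ I would either argue that every miner participates with $h_i>0$ at equilibrium (consistent with the manifestly positive numerator $\alpha_i\beta_i H^2+HR(H)$) or handle abstaining miners separately. Concavity ensures any stationary point is the global best response, so beyond invoking Fact~\ref{fact:nash} no further second-order verification is required.
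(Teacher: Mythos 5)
Your proposal is correct and follows essentially the same route as the paper's own proof: invoke Fact~\ref{fact:nash} for existence and uniqueness via concavity, then derive the first-order condition $\frac{H_{-i}R(H)}{H^2}+\frac{h}{H}R'(H)=\alpha_i(h-\beta_i)$ and solve the resulting linear equation in $h$. Your additional remarks on the chain-rule term, the interiority of the optimum (justified by the positive numerator, matching the paper's closing note that the solution is non-negative), and the fixed-point nature of the identity are all consistent with, and slightly more careful than, the paper's argument.
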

\begin{proof}
As we previously discussed in Fact \ref{fact:nash}, the existence and uniqueness of a Nash equilibrium follow from the concavity of the utility function. To find the equilibrium, we will compute the best response of miner $i$. We will take the derivative of the utility function with respect to the hash rate, and compare it to $0$:
\begin{align*}
u'(h) &= r'_i(h) - c'_i(h) \\
&= \frac{H_{-i}R(h+H_{-i})}{(h + H_{-i})^2} + \frac{h}{h+H_{-i}}R'(h+H_{-i}) - \alpha_i (h - \beta_i) \\
&= \frac{(H-h)}{H^2}R(H) + \frac{h}{H}R'(H) - \alpha_i (h - \beta_i) = 0,
\end{align*}
And so:
\begin{align*}
&(H-h)R(H) + hHR'(H) - (h - \beta_i)\alpha_i H^2 = 0 \\
&(\alpha_i H^2 + R(H) - HR'(H)) h  = \alpha_i \beta_i H^2 + HR \\
&h = \frac{\alpha_i \beta_i H^2 + HR(H)}{\alpha_i H^2 + R(H) - HR'(H)}.
\end{align*}
Note that this solution is always non-negative and thus applicable.
\end{proof}

One perhaps surprising consequence of this result is that \emph{everyone} is participating under this framework. This becomes clear in Eq. (\ref{eq:marginal}), where we can see that the marginal cost near $0$ is small.

Now we can see an example of a linear reward function:
\begin{corollary}\label{cor:linear-hashrate}
If the block reward is linear in the total hash rate, $R(H)=\rho H$ for some relative reward parameter $\rho > 0$, we get that in equilibrium the hash rate of miner $i$ is:
\begin{equation}\label{eq:linear-hashrate}
  h_i = \beta_i + \frac{1}{\alpha_i} \rho.
\end{equation}
\end{corollary}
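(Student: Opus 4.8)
The plan is to obtain this as an immediate specialization of Theorem \ref{thm:hashrate}: substitute the linear reward $R(H) = \rho H$ into the closed-form equilibrium expression of Eq. (\ref{eq:hashrate}) and simplify. The key observation that makes everything collapse is that a linear reward makes the correction term in the denominator vanish.

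Concretely, I would first record that $R(H) = \rho H$ gives $R'(H) = \rho$, so that $HR'(H) = \rho H = R(H)$ and hence $R(H) - HR'(H) = 0$. Plugging this into Eq. (\ref{eq:hashrate}), the denominator reduces to $\alpha_i H^2$, while the numerator becomes $\alpha_i \beta_i H^2 + H \cdot \rho H = H^2(\alpha_i \beta_i + \rho)$. Cancelling the common factor $H^2$ then yields
\[
h_i = \frac{\alpha_i \beta_i + \rho}{\alpha_i} = \beta_i + \frac{1}{\alpha_i}\rho,
\]
which is exactly Eq. (\ref{eq:linear-hashrate}).

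There is no genuinely hard step here; the only point deserving care is the legitimacy of cancelling $H^2$, which requires $H \neq 0$. This is guaranteed because $\alpha_i > 0$, $\beta_i \geq 0$, and $\rho > 0$ force each $h_i \geq \rho/\alpha_i > 0$, so $H = \sum_i h_i > 0$. The conceptually notable feature to highlight — which is really the content of the corollary — is that after simplification the right-hand side no longer depends on $H$ at all. For a general reward function, Eq. (\ref{eq:hashrate}) is only an implicit characterization, since $H = \sum_j h_j$ couples all the miners' equilibrium conditions and one must solve a fixed-point system; under a linear reward this coupling disappears entirely, and the equilibrium hash rate of each miner is read off in closed form as the sum of their coupons $\beta_i$ and a compute-cost-discounted reward term $\rho/\alpha_i$. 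I would emphasize this interpretation, as it is precisely the ``nice form with intuitive meaning'' advertised in the introduction.
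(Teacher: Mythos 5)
Your proof is correct and follows exactly the paper's own route: substitute $R(H)=\rho H$ and $R'(H)=\rho$ into Eq.~(\ref{eq:hashrate}), observe the denominator's correction term $R(H)-HR'(H)$ vanishes, and cancel $H^2$. The extra remarks on $H>0$ and on the decoupling of miners' equilibrium conditions are sound additions but do not change the argument.
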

\begin{proof}
Substitute $R(H)=\rho H$ and $R'(H)=\rho$ in Eq. (\ref{eq:hashrate}):
\[
h_i = \frac{\alpha_i \beta_i H^2 + \rho H^2}{\alpha_i H^2} = \beta_i + \frac{1}{\alpha_i} \rho.
\]
\end{proof}

Corollary \ref{cor:linear-hashrate} presents an intuitive solution for equilibrium: first, use up all your coupons. Then, do the same work you would have done without any coupons available to you. Moreover, the amount of work beyond coupons has a simple form of linearly decreasing with the compute cost and linearly increasing with the relative reward.

Also worth noting is that \textbf{the hash rate at equilibrium does not depend on the actions of other miners}, and thus easily computable even in a private setting!

\subsection{Utility under linear rewards}
Now that we know the hash rate of each miner under equilibrium, we can also compute their net reward.
\begin{lemma}\label{lem:utility}
If the block reward is linear in the total hash rate, $R(H)=\rho H$ for some relative reward parameter $\rho > 0$, we get that in equilibrium the utility of miner $i$ is:
\begin{equation}\label{eq:linear-utility}
    u_i(h_i) = \frac{1}{2\alpha_i}\left(\alpha_i^2 \beta_i^2 + 2\alpha_i \beta_i \rho + \rho^2 \right).
\end{equation}
\end{lemma}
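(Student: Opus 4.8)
The plan is to exploit a drastic simplification that linear rewards induce in the block-reward term, and then to substitute the equilibrium hash rate from Corollary \ref{cor:linear-hashrate} into the resulting expression. The crucial observation is that when $R(H) = \rho H$, the proportional allocation collapses: the expected block reward of miner $i$ becomes
\[
r_i(h) = \frac{h}{H}R(H) = \frac{h}{H}\rho H = \rho h,
\]
which is \emph{independent} of $H_{-i}$. Consequently, combining the utility of Eq. (\ref{eq:utility}) with the quadratic cost of Eq. (\ref{eq:cost}) yields the clean one-variable form
\[
u_i(h) = \rho h - c_i(h) = (\rho + \alpha_i \beta_i)\,h - \frac{\alpha_i}{2}h^2 .
\]

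First I would establish this simplified utility and note that it is a concave quadratic in $h$ alone. Second I would recall from Corollary \ref{cor:linear-hashrate} that the equilibrium hash rate is $h_i = \beta_i + \rho/\alpha_i$, equivalently $\alpha_i h_i = \rho + \alpha_i \beta_i$. Substituting $h = h_i$ into the simplified utility and collecting terms then yields the claimed value in Eq. (\ref{eq:linear-utility}).

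The quickest route to the final form is to observe that the linear coefficient $\rho + \alpha_i \beta_i$ equals exactly $\alpha_i h_i$, so that $u_i(h) = \alpha_i h_i \cdot h - \frac{\alpha_i}{2}h^2$; evaluating at $h = h_i$ gives $u_i(h_i) = \alpha_i h_i^2 - \frac{\alpha_i}{2}h_i^2 = \frac{\alpha_i}{2}h_i^2$. Expanding $\frac{\alpha_i}{2}h_i^2 = \frac{1}{2\alpha_i}(\alpha_i\beta_i + \rho)^2$ and squaring out the binomial reproduces Eq. (\ref{eq:linear-utility}) directly. There is no genuine obstacle here beyond bookkeeping in the algebra; the only point that warrants care is the justification that $r_i$ loses its dependence on $H_{-i}$ under linear rewards, after which the proof reduces to substituting a single number into a quadratic.
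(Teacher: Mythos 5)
Your proof is correct and takes essentially the same route as the paper: both substitute the quadratic cost into the utility, use linearity of the reward to collapse $r_i(h)$ to $\rho h$, and then plug in the equilibrium hash rate $h_i = \beta_i + \rho/\alpha_i$ from Corollary \ref{cor:linear-hashrate}. Your intermediate observation that $u_i(h_i) = \frac{\alpha_i}{2}h_i^2$ is a slightly tidier way to organize the final algebra, but it is the same computation.
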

\begin{proof}
We start substituting Eq. (\ref{eq:cost}) in (\ref{eq:utility}):
\[
u_i(h) = \left(\frac{R(H)}{H} + \alpha_i \beta_i - \frac{\alpha_i}{2}h\right)h = \left(\rho + \alpha_i \beta_i - \frac{\alpha_i}{2}h\right)h.
\]

We can now use Eq. (\ref{eq:linear-hashrate}) and get:
\begin{align*}
u_i(h_i) &= \left(\rho + \alpha_i \beta_i - \frac{\alpha_i}{2}h_i\right)h_i \\
&= \left(\rho + \alpha_i \beta_i - \frac{\alpha_i}{2}\left(\beta_i + \frac{1}{\alpha_i} \rho\right)\right)\left(\beta_i + \frac{1}{\alpha_i} \rho\right) \\
&= \frac{1}{2\alpha_i}\left(\alpha_i^2 \beta_i^2 + 2\alpha_i \beta_i \rho + \rho^2 \right).
\end{align*}
\end{proof}

The specific equation is not significant in itself, and its structure mainly follows from our quadratic approximation of the cost function. The qualitative feature we do obtain from this result is that \emph{the utility is convex in the amount of coupons}.

This convexity means that \textbf{it is more profitable for a miner to ``save up'' coupons to use in a single block}, rather than distribute them equally. Say, for example, that miner $i$ has access to $n$ coupons that expire after two blocks. Set their compute cost to $\alpha_i=1$ for simplicity.

\begin{itemize}
    \item In the case the miner divides the coupons equally, such that in both blocks the amount of coupons is $\beta_i\frac{n}{2}$, we get the utility in both blocks is:
    \[
    u_i^{(1)} = u_i^{(2)} = \frac{1}{4}\left( \frac{n^2}{2} + 2\rho n + 2\rho^2 \right),
    \]
    making the total utility be:
    \[
    u_i = \frac{1}{2}\left( \frac{n^2}{2} + 2\rho n + 2\rho^2 \right).
    \]
    \item In the case the miner ``saves up'' the coupons, with $\beta_i=0$ in the first block and $\beta_i=n$ in the second, we get the utility in the first block is:
    \[
    u_i^{(1)} = \frac{1}{2} \rho^2,
    \]
    and in the second block it is:
    \[
    u_i^{(2)} = \frac{1}{2}\left( n^2 + 2\rho n + \rho^2 \right),
    \]
    making the total utility be:
    \[
    u_i = u_i^{(1)} + u_i^{(2)} = \frac{1}{2}\left( n^2 + 2\rho n + 2\rho^2 \right).
    \]
\end{itemize}

As we can see, using all the coupons at once increased the net reward of the miner across the two blocks by $\frac{n^2}{2}$.

\section{Effects of external rewards on decentralization}\label{sec:decentralization}

We can model Proof-of-Useful-Work mining as computing a real-world function $f(x)$ for some input $x$  \emph{chosen by the miner} itself for some external application. The base assumption is that there is some external entity willing to pay for the computation of $f(x)$. Hence, in contrast to traditional Proof-of-Work, miners require access to two types of resources: compute power (e.g., hardware and electricity) and \emph{useful tasks}, i.e., inputs $x$ for which the miner can obtain external rewards. We can therefore roughly translate the amount of useful tasks miner $i$ can access, to the number of \emph{coupons} $\beta_i$.

One concern regarding PoUW blockchains is that they are less decentralized due to mining being \emph{subsidized} by external actors. For example, if there is a single centralized actor with access to a large amount of coupons, they can maybe control the network for ``free''.

The key argument for alleviating this ``centralization'' concern is the realization that, in PoUW systems, coupons are a key resource required for \emph{efficient} mining, and in some cases (in contrast to energy), coupons are both \emph{scarce} and \emph{decentralized}. For example, we can see in Figure \ref{fig:ent_coupons} that AI workloads are distributed among many organizations around the world. Perhaps more importantly, coupons in many cases are \textbf{non-liquid} -- In a PoUW protocol where the computational overhead of participating in consensus is negligible, there is no incentive for rational, useful data owners to transfer their coupons at a low cost, as they are highly beneficial for them.

\begin{figure}[ht]
    \centering
    \includegraphics[width=0.95\textwidth]{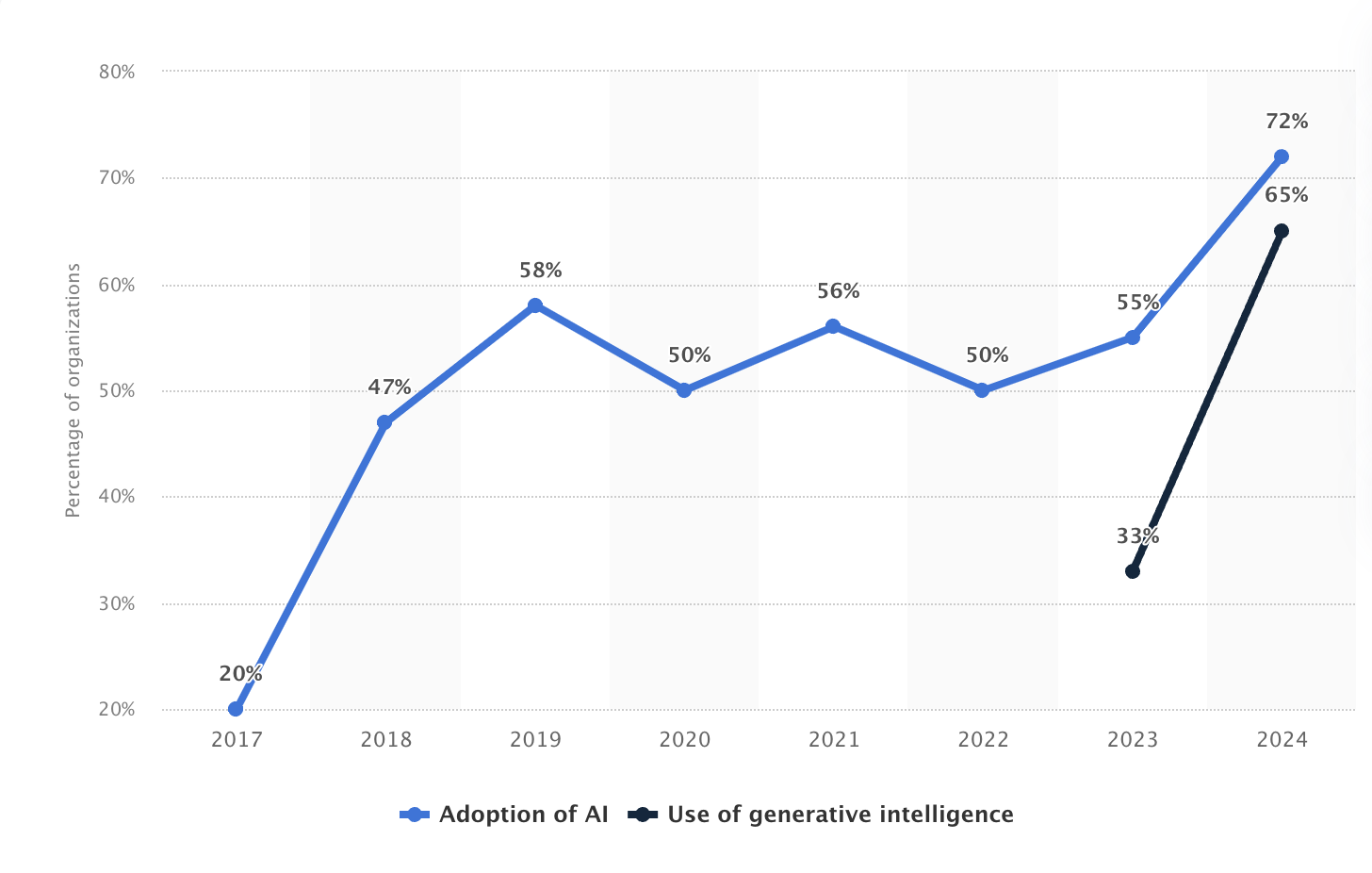}
    \caption{AI adoption rate among businesses worldwide overtime, as a proxy for AI \emph{data distribution}. Source: \url{statista.com}.}
    \label{fig:ent_coupons}
\end{figure}

From this perspective, PoUW can be viewed as a \emph{hybrid between Proof-of-Work and Proof-of-Stake}: mining requires both hoarding capital (coupons) and doing computation work. Hence, PoUW systems can benefit from the decentralization of both useful tasks and computing resources. Figure \ref{fig:ent_GPUs} illustrates this important distinction between the entropy of useful tasks availability versus the entropy of computing power used for mining.

\begin{figure}[ht]
    \centering
    \includegraphics[width=\textwidth]{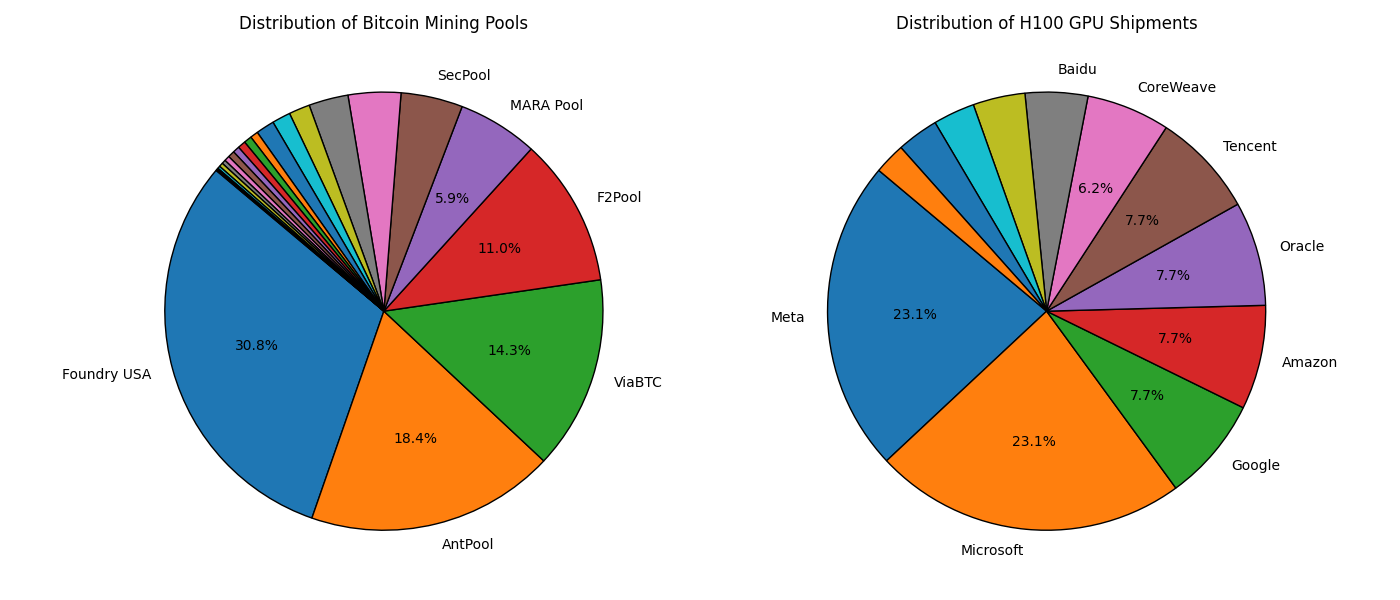}
    \caption{Distribution of hash rate of Bitcoin mining pools from Feb 2024 to January 2025 (left) vs. distribution of Nvidia H100 shipments to tech companies during 2023, as a proxy for their access to useful AI workloads (right). Sources: \url{https://hashrateindex.com} and \url{https://www.statista.com}.}
    \label{fig:ent_GPUs}
\end{figure}

\subsection{Properties of the decentralization coefficient under linear rewards}
Using Definition \ref{def:entropy}, we can quantify how decentralization is affected by coupons. For simplicity, we will analyze it under the assumption of linear rewards.

\begin{lemma}\label{lem:decentralization}
    Assuming a linear block reward $R(H)=\rho H$, let $h_1,...,h_n$ be the hash rates of all miners under equilibrium. Denoting $h^\alpha_i \triangleq \frac{1}{\alpha_i}\rho$ the hash rate due to the compute cost component, and $h^\beta_i \triangleq \beta_i$ the hash rate due to compute coupons, we get:
    \begin{equation*}
        D(h_1,....,h_n) \geq \frac{\sum_{i\in [n]}\beta_i}{H} D(h^\beta_1,...,h^\beta_n) + \frac{\sum_{i\in [n]}\frac{1}{\alpha_i}\rho}{H} D(h^\alpha_1,...,h^\alpha_n),
    \end{equation*}
    where $D$ is the decentralization coefficient.
\end{lemma}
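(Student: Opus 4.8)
The plan is to recognize the claimed inequality as a direct consequence of the \emph{concavity} of the Shannon entropy, once the equilibrium distribution is expressed as a convex combination of the two ``component'' distributions. First I would invoke Corollary \ref{cor:linear-hashrate} to write each equilibrium hash rate as $h_i = h^\beta_i + h^\alpha_i$ with $h^\beta_i = \beta_i$ and $h^\alpha_i = \frac{1}{\alpha_i}\rho$, and set $H^\beta = \sum_{i\in[n]}\beta_i$ and $H^\alpha = \sum_{i\in[n]}\frac{1}{\alpha_i}\rho$, so that $H = H^\beta + H^\alpha$.

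The key algebraic step is to normalize each coordinate:
\[
\frac{h_i}{H} = \frac{H^\beta}{H}\cdot\frac{h^\beta_i}{H^\beta} + \frac{H^\alpha}{H}\cdot\frac{h^\alpha_i}{H^\alpha}.
\]
Writing $\lambda = H^\beta/H$ (so that $1-\lambda = H^\alpha/H$), this identity says that the normalized equilibrium distribution $(h_i/H)_i$ is exactly the convex combination $\lambda\, p^\beta + (1-\lambda)\, p^\alpha$ of the two normalized component distributions $p^\beta = (h^\beta_i/H^\beta)_i$ and $p^\alpha = (h^\alpha_i/H^\alpha)_i$.

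Next I would apply the standard fact that the Shannon entropy $S(p) = -\sum_i p_i \log p_i$ is concave on the probability simplex, which gives
\[
S\bigl(\lambda\, p^\beta + (1-\lambda)\, p^\alpha\bigr) \geq \lambda\, S(p^\beta) + (1-\lambda)\, S(p^\alpha).
\]
Finally, I would observe that $D$ as defined in Definition \ref{def:entropy} normalizes its arguments internally, so that $D(h_1,\dots,h_n) = S\bigl((h_i/H)_i\bigr)$, $D(h^\beta_1,\dots,h^\beta_n) = S(p^\beta)$, and $D(h^\alpha_1,\dots,h^\alpha_n) = S(p^\alpha)$. Substituting these back and recalling $\lambda = \sum_{i\in[n]}\beta_i / H$ and $1-\lambda = \sum_{i\in[n]}\frac{1}{\alpha_i}\rho / H$ yields precisely the stated inequality.

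There is essentially no serious obstacle here: the whole argument reduces to the convex-combination identity above together with concavity of entropy. The only point requiring care is the degenerate case $H^\beta = 0$ (i.e.\ every $\beta_i = 0$), where $p^\beta$ is undefined; there the corresponding weight $\lambda$ vanishes, and under the usual convention $0\log 0 = 0$ the first term drops out so the inequality holds trivially, in fact with equality. I would also verify the scale-invariance of $D$ explicitly, to confirm that the three entropy terms line up exactly with the three $D$-terms appearing in the statement.
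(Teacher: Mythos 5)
Your proposal is correct and follows essentially the same route as the paper's own proof: expressing the normalized equilibrium hash rates as the convex combination $\lambda\,\hat{h}^\beta + (1-\lambda)\,\hat{h}^\alpha$ with $\lambda = \sum_i \beta_i / H$, then applying concavity of Shannon entropy. Your explicit treatment of the degenerate case $\sum_i \beta_i = 0$ is a small point of extra care that the paper omits, but it does not change the argument.
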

\begin{proof}
Using Eq. (\ref{eq:linear-hashrate}) we gather that under equilbria, the hash rate of miner $i$ is $h_i = \beta_i + \frac{1}{\alpha_i}\rho$, and the normalized hash rate is $\hat{h}_i = \frac{h_i}{H} = \frac{\beta_i + \frac{1}{\alpha_i}\rho}{H}$. Let also $\hat{h}^\beta_i = \frac{\frac{1}{\alpha_i}\rho}{\sum_{j\in [n]}{\frac{1}{\alpha_j}\rho}}$ be the normalized hash rate of miner $i$ due to the compute cost component, and $\hat{h}^\beta_i = \frac{\beta_i}{\sum_{j\in [n]}{\beta_j}}$ the normalized hash rate due to the compute coupons component. We can now express the normalized hash as a weighted sum of its two normalized components:
\begin{align*}
\hat{h}_i &= \frac{h_i}{H}\\
&= \frac{\beta_i}{H} + \frac{\frac{1}{\alpha_i}\rho}{H}\\
&= \frac{\sum_{j\in [n]}{\beta_j}}{H} \frac{\beta_i}{\sum_{j\in [n]}{\beta_j}} + \frac{\sum_{j\in [n]}{\frac{1}{\alpha_j}\rho}}{H} \frac{\frac{1}{\alpha_i}\rho}{\sum_{j\in [n]}{\frac{1}{\alpha_j}\rho}}\\
&= \frac{\sum_{j\in [n]}{\beta_j}}{H} \hat{h}_i^\beta + \frac{\sum_{j\in [n]}{\frac{1}{\alpha_j}\rho}}{H} \hat{h}_i^\alpha.
\end{align*}

Setting $\lambda\triangleq\frac{\sum_{j\in [n]}{\beta_j}}{H} \in [0,1]$ as the relative magnitude of the compute coupons compared to the total hash rate, we get that:
\[
\hat{h}_i = \lambda \hat{h}_i^\beta + (1-\lambda) \hat{h}_i^\alpha.
\]

It is a well-known fact that the Shannon entropy function $S$ is concave \cite{cover1999elements}, and thus we can complete the proof:
\begin{align*}
 D(h_1,....,h_n) &= S(\hat{h}_1,...,\hat{h}_n)\\
 &\geq \lambda S(\hat{h}_1^\beta,...,\hat{h}_n^\beta) + (1-\lambda) S(\hat{h}_1^\alpha,...,\hat{h}_n^\alpha)\\
 &= \frac{\sum_{i\in [n]}\beta_i}{H} D(h^\beta_1,...,h^\beta_n) + \frac{\sum_{i\in [n]}\frac{1}{\alpha_i}\rho}{H} D(h^\alpha_1,...,h^\alpha_n).
\end{align*}
\end{proof}

Lemma \ref{lem:decentralization} states that the decentralization coefficient under equilibrium is \emph{at least} the weighted average of decentralization coefficients of the compute cost and compute coupons, weighted relative to their contribution to the total hash rate.

We can conclude the following result, matching our intuition:
\begin{itemize}
\item If the amount of total compute coupons available is low compared to the compute costs, the effects of coupons on decentralization are negligible.
\item Otherwise, if there is a significant amount of compute coupons, the entropy of their distribution across different actors will determine the decentralization of the network. Specifically:
\begin{itemize}
    \item If the entropy of coupons is low (e.g., one miner holds all the coupons), the decentralization coefficient might also be low, and the network might be less secure.
    \item If the entropy of coupons is high (e.g., divided equally between all the miners), the decentralization coefficient is high, increasing the security of the network.
\end{itemize}
\end{itemize}

\section{Conclusions}\label{sec:conclusions}

In this work, we introduced a general game-theoretic framework to analyze Proof-of-Work systems that incorporate external rewards, capturing the dynamics of Proof-of-Useful-Work (PoUW) mining. By approximating miner costs with a quadratic function, we highlighted key properties of the resulting equilibrium. First, we showed that these cost functions guarantee a unique pure Nash equilibrium in hash rate allocation: in equilibrium, every miner expends some non-zero effort, and the precise level of participation can be characterized solely from the miner’s cost and coupon parameters. Second, we demonstrated that miners optimally consume any available coupons in a concentrated manner (i.e., allocating them to a single block) to maximize profit. This result underscores an intrinsic incentive to bundle useful tasks, rather than spread them evenly over time.

Our findings further suggest that the overall impact of decentralization depends on the distribution of coupons (useful tasks). In scenarios where coupons are broadly available and non-transferable, PoUW encourages wider participation and contributes positively to the system’s decentralization. Conversely, if coupons remain concentrated in the hands of a few miners, their dominant presence could centralize hash power. We quantified this effect using the Shannon entropy of the hash rate distribution, providing a straightforward analytical tool to assess changes in decentralization when external rewards are introduced.

From a practical standpoint, our results support the potential of PoUW to retain PoW-level security with lower net costs and environmental effects, thanks to computational reuse. As real-world use cases for specialized tasks (e.g., AI training and inference) continue to grow, designing PoUW protocols that maintain robust levels of security and decentralization remains a crucial research direction. Future work might explore more intricate cost functions or time-varying rewards to build more precise models of miner behavior in multi-utility environments.

%

\bibliographystyle{plain}
\bibliography{refs}



\end{document}